\newcounter{sarrow}
\newcommand\xrsquigarrow[1]{%
\stepcounter{sarrow}%
\mathrel{\begin{tikzpicture}[baseline= {( $ (current bounding box.south) + (0,-0.5ex) $ )}]
\node[inner sep=.5ex] (\thesarrow) {$\scriptstyle #1$};
\path[draw,<-,decorate,
  decoration={zigzag,amplitude=0.7pt,segment length=1.2mm,pre=lineto,pre length=4pt}] 
    (\thesarrow.south east) -- (\thesarrow.south west);
\end{tikzpicture}}%
}
\newcommand{\cmark}{\ding{51}}%
\newcommand{\xmark}{\ding{55}}%
\newcommand{\ltl}{LTL\xspace}
\newcommand{\kltl}{KLTL\xspace}
\newcommand{\yltlso}{YLTL\textsuperscript{2}\xspace}
\newcommand{\AP}{\text{AP}}
\newcommand{\act}{\mathit{Act}}
\newcommand{\ags}{\text{AG}}
\newcommand{\sov}{\text{SO}}
\newcommand{\true}[0]{\mathit{true}}
\newcommand{\false}[0]{\mathit{false}}
\newcommand{\Lang}{\mathcal{L}}
\newcommand{\U}{\LTLuntil}
\newcommand{\K}{\mathtt{K}}
\newcommand{\kripke}{\mathcal{T}}
\newcommand{\ekripke}{\mathcal{E}}
\renewcommand{\models}{\vDash}
\newcommand{\nmodels}{\nvDash}
\newcommand{\ap}{\text{AP}}
\newcommand{\donotshow}[1]{}
\newtheorem{example}{Example}
\newtheorem{theorem}{Theorem}
\newtheorem{proposition}{Proposition}
\newtheorem{definition}{Definition}
\title{An Information-Flow Perspective on Explainability Requirements:\\ Specification and Verification}
\author{%
Bernd Finkbeiner$^1$\and
Hadar Frenkel$^2$\and
Julian Siber$^1$\\
\affiliations
$^1$CISPA Helmholtz Center for Information Security, Saarbrücken, Germany\\
$^2$Bar-Ilan University, Ramat Gan, Israel\\
\emails
finkbeiner@cispa.de,
hadar.frenkel@biu.ac.il,
julian.siber@cispa.de
}
\begin{document}

\maketitle

\begin{abstract}
Explainable systems expose information about why certain observed effects are happening to the agents interacting with them. We argue that this constitutes a positive flow of information that needs to be specified, verified, and balanced against negative information flow that may, e.g., violate privacy guarantees. Since both explainability and privacy require reasoning about knowledge, we tackle these tasks with epistemic temporal logic extended with quantification over counterfactual causes. This allows us to specify that a multi-agent system exposes enough information such that agents acquire knowledge on why some effect occurred. We show how this principle can be used to specify explainability as a system-level requirement and provide an algorithm for checking finite-state models against such specifications. We present a prototype implementation of the algorithm and evaluate it on several benchmarks, illustrating how our approach distinguishes between explainable and unexplainable systems, and how it allows to pose additional privacy requirements. 
\end{abstract}

\section{Introduction}\label{sec:intro}
Contemporary autonomous systems are increasingly complex and opaque, yet also deployed in consequential applications such as hiring~\cite{Amazon}, healthcare~\cite{Davenport}, and criminal sentencing~\cite{ProPublica16}. This tension has led to an extensive inquiry into methods that provide explanations for the behavior of these systems, such that agents interacting with, e.g., a hiring system may know \emph{why} their application is rejected~\cite{MershaLWAK24}. 

Although an explanation may provide critical and actionable recourse to one agent, it may also reveal private information about another~\cite{NguyenHRNNYN25}. For instance, a rejected job application may be explained by the applicant's pay requirement being over a certain threshold, but if the applicant observes a sufficiently similar application being accepted, they can infer the future salary of the other agent. This line of reasoning leads to an inherent tradeoff between explainability and privacy, which we study in this paper.

Privacy requirements are commonly formalized as \emph{information-flow policies}~\cite{GoguenM82a,KozyriCM22}, i.e., a system must not only restrict direct access to private information but also restrict propagation of this information through indirect channels. We approach the formalization of explainability requirements from the same perspective and express them as end-to-end information-flow policies in a specification language for multi-agent systems. On a high level, such an explainability requirement defines the necessary flow of information in the following way: It specifies what needs to be explained (the \emph{explanandum}), how it needs to be explained (the \emph{explanans}), and when it needs to be explained, to whom.
We propose a verification algorithm that can then check whether a given system ensures sufficient flow of information to meet a requirement in our specification language. With the same language, we can also express privacy requirements, such that we can similarly check that explainability does not come at the expense of security.

Our specification language extends epistemic temporal logic~\cite{FaginHMV1995}. Epistemic and temporal logics have been popular frameworks for studying information-flow security~\cite{HalpernO08,BalliuDG11,ClarksonFKMRS14,CoenenFHH19}. This inspires us to study the flow of information in explainable systems through the same lens. We build on the popular framework of counterfactual explanations~\cite{HalpernP05b,Wachter18} and instrument our logic with operators to reason about temporal causes~\cite{FinkbeinerFMS24}. These are based on a temporal variant of actual causation~\cite{Halpern16}, which uses counterfactual reasoning~\cite{Lewis73} to explain a given  execution and has received significant attention in the literature on explainable AI~\cite{Miller19}. A temporal cause can for instance be described symbolically by the formula $\LTLprevious a_1 \land \LTLpastfinally a_2$, which means that action $a_1$ at the previous time point and action $a_2$ at any earlier time point have jointly caused some effect: The cause needs to be satisfied and describe the minimal changes necessary to obtain a counterfactual  execution where the effect does not happen.

The combination of counterfactual, epistemic and temporal reasoning allows us to express explainability requirements of the following form:
\begin{align*}
    \LTLglobally \big(\psi \rightarrow \exists X .\, \K_a (X \xrsquigarrow{\mathit{Act}(a)} \psi)\big) \enspace ,
\end{align*}
which states that the explanandum $\psi$ is explainable to agent $a$ whenever $\psi$ occurs, via the temporal cause $X$ serving as explanans. The cause $X$ is constrained to reason only over the actions $\mathit{Act}(a)$ of agent~$a$, which is why we term this requirement \emph{Internal Causal Explainability (ICE)}. The requirement uses the temporal operator $\LTLglobally$ to enforce its constraint on every time point of an execution, and the epistemic operator $\K_a$ to express its key epistemic component: The semantics of this latter operator require that the same property $X$ causes $\psi$ on all  executions that are indistinguishable to agent $a$, which means that agent $a$ has acquired knowledge of the associated causal dependency. 

\paragraph{Contributions and Outline.} We give a detailed example to illustrate how ICE encodes explainability in an information-flow sense in Section~\ref{sec:motivation}, after establishing necessary preliminaries in Section~\ref{sec:prelims}. The motivating example highlights how explainability requirements, such as ICE, require tradeoffs when juxtaposed with privacy requirements placed on the same system. We then introduce the formal details of our logic in Section~\ref{sec:main}, where we discuss other explainability requirements beside ICE, and outline an algorithm to verify whether a given system satisfies a requirement specified in our logic. The main challenge for the algorithm is the second-order quantification that ranges over sets of traces, as related logics with unrestricted quantifiers of this kind cannot be verified automatically~\cite{BeutnerFFM23}. We show how to exploit the fact that causes are uniquely determined to encode the second-order quantifiers over sets of traces into the decidable satisfiability problem of a temporal logic with first-order quantification over atomic propositions only.
We then report on experiments with a prototype implementation of our approach in Section~\ref{sec:experiments}. Our prototype can verify both explainability and privacy requirements as introduced throughout the paper, on multi-agent systems with up to several thousand states. We use classic games and an auction system for these experiments.
Last, we discuss related work in Section~\ref{sec:related} and close with a short summary and outlook on future work in Section~\ref{sec:summary}.

\section{Preliminaries}\label{sec:prelims}

We recall the formal background on transition systems as models of multi-agent systems, temporal logics for specifying system requirements, and temporal causality for defining counterfactual dependencies between temporal properties.

\subsubsection{Multi-Agent Systems.}

We consider \emph{transition systems} as the fundamental model of the logics we will study in this paper. A transition system is a tuple $\kripke = (S,S_0,\Delta,\AP,\Lambda)$, where $S$ is a finite set of \emph{states}, $S_0$ is a set of \emph{initial states}, $\Delta: S \mapsto 2^S$ is a \emph{transition function} such that $\Delta(s) \neq \emptyset$ for all states $s \in S$, $\AP$ is a set of \emph{atomic propositions}, and $\Lambda : S \times S \mapsto 2^\AP$ is a \emph{labeling function} marking edges with atomic propositions. Executions of a system are modeled as follows. A \emph{path} $\rho = \rho[0] \rho[1] \ldots \in S^\omega$ of $\kripke$ is an infinite sequence of states following the transition function: $\rho[i+1] \in \Delta(\rho[i])$ for all \emph{time points} $i \in \mathbb{N}$. The \emph{trace} $\pi =  \pi[0] \pi[1] \ldots \in (2^\AP)^\omega$ of a path $\rho$ is the sequence of corresponding labels, i.e., we have $\pi[i] = \Lambda(\rho[i],\rho[i+1])$ for all $i \in \mathbb{N}$.  Let $\Pi(\kripke)$ denote the set of traces of initial paths, i.e., of $\rho$ such that $\rho[0] \in S_0$. For some trace $\pi$, $\pi[0,n] \in S^*$ is its \emph{prefix} of length $n+1$. For two traces $\pi,\pi \in S^\omega$ and $A \subseteq \ap$ we write $\pi =_A \pi'$ if $\pi[i]\cap A = \pi'[i]\cap A$ for all time points $i \in \mathbb{N}$.
We model a multi-agent system as an \emph{extended transition system} $\ekripke = (\kripke,\Omega,\act)$ that includes an \emph{observation map} $\Omega : \ags \mapsto 2^\AP$ to reason about the observations of a set of agents $\ags$ and an \emph{action map} $\act : \ags \mapsto 2^\AP$ to reason about their controllable actions. We will use the shorthand $\act$ for the image of $\ags$ under $\act$, i.e., the set of all actions. We call every atomic proposition that is not an action a system \emph{output}. For some agent $a \in \ags$, $\Omega(a)$ describes the set of atomic propositions that are observable to agent $a$, and $\act(a)$ describes which actions are controllable by $a$. We assume $\act(a) \subseteq \Omega(a)$. For some trace $\pi$ of $\kripke$, $\Omega_a(\pi) \in (2^\AP)^\omega$ are the partial observations of $a$ along the trace: $\Omega_a(\pi)[i] = \pi[i] \cap \Omega(a)$. We further assume that all actions are possible from every state (although they may have no effect): For all $s \in S$ and $A \subseteq \act$, there is an $s' \in S$ such that $s' \in \Delta(s)$ and $A = \Lambda(s,s') \cap \act$. We say that a multi-agent system is \emph{deterministic} if there exists exactly one such $s'$ for all $s \in S$ and $A \subseteq \act$. The set of traces of $\ekripke = (\kripke,\Omega)$ is denoted $\Pi(\ekripke) = \Pi(\kripke)$.

\begin{example}
    Consider the multi-agent system $\mathcal{A}_\mathit{explain}$ shown in Figure~\ref{fig:dutch}. It models an explainable auction between three bidders. Nodes and edges depict states and the transition function, respectively. The single initial state is indicated through an incoming edge without a source state. For brevity, edge-labels use symbolic notation for actions (left of the bar) and explicit sets for the outputs (right of the bar). If there are no outputs, we only depict the action constraints. A trace of this system is $\pi = \{o,b_1,b_2,e\}(\{o,b_1,b_2\})^\omega$, which corresponds to path $\rho_1 = \text{\emph{init}}\,(\text{\emph{win}}_1)^\omega$ or $\rho_2 = \text{\emph{init}}\,(\text{\emph{win}}_2)^\omega$, since the set $\{o,b_1,b_2\}$ satisfies $o \land \mathit{b}_1$ as well as $o \land \mathit{b}_2$. The $\omega$-superscript denotes infinite repetition of the subsequence.
\end{example}

\subsubsection{Temporal Logics.}
The basis of our logic is the epistemic temporal logic \kltl, which extends \emph{Linear Temporal Logic (\ltl)} with a knowledge modality. We defer KLTL and start with a definition of \ltl~\cite{Pnueli77}. We include past operators, which do not increase expressive power~\cite{LichtensteinPZ85}, but will make some formulas more readable. The syntax is given as follows:
\begin{align}\varphi \Coloneqq p \mid \neg \varphi \mid \varphi \lor \varphi \mid \LTLnext \varphi \mid \varphi \U \varphi \mid \LTLprevious \varphi \mid \varphi \U^- \varphi \enspace,\label{gram:ltl}\end{align}
where $p \in \AP$ is an atomic proposition. Additionally, \ltl includes the following derived operators: Boolean constants ($\true$, $\false$) and connectives ($\lor$, $\rightarrow$, $\leftrightarrow$), the temporal operator `Eventually' ($\LTLeventually \varphi \equiv \true \LTLuntil \varphi$) as well as its dual, `Globally' ($\LTLglobally \varphi \equiv \lnot \LTLeventually \lnot \varphi $), and the derived past operators `Once' ($\LTLpastfinally \varphi \equiv \true \LTLuntil^- \varphi$) and `Historically' ($\LTLpastglobally \varphi \equiv \lnot \LTLpastfinally \lnot \varphi $). The semantics of an \ltl formula $\varphi$ is defined with respect to a trace $\pi$ and a time point $i$  as follows:
\begin{alignat*}{2}
	&\pi,i \models p	&~\text{ iff }~	&p \in \pi[i],\\
	&\pi,i \models \lnot \varphi	&~\text{ iff }~	& \pi,i \nmodels \varphi,\\
	&\pi,i \models \varphi_1 \lor \varphi_2	&~\text{ iff }~	& \pi,i \models \varphi_1 \text{ or } \pi,i \models \varphi_2,\\
	&\pi,i \models \LTLnext \varphi	&~\text{ iff }~	& \pi,i+1 \models \varphi,\\
	&\pi,i \models \LTLprevious \varphi	&~\text{ iff }~	& \pi,i-1 \models \varphi \text{ and } i > 0,\\
	&\pi,i \models \varphi_1 \LTLuntil \varphi_2	&~\text{ iff }~	& \exists k \geq i: \pi,k \models \varphi_2 \text{ and} \\  & & &\forall i \leq j < k: \pi,j \models \varphi_1,\\
	&\pi,i \models \varphi_1 \LTLuntil^- \varphi_2	&~\text{ iff }~	& \exists g \leq i: \pi,g \models \varphi_2 \text{ and} \\  & & &\forall g \leq h < i: \pi,h \models \varphi_1  .
\end{alignat*}
We call the combination of trace and time point an \emph{anchor point}. System-level satisfaction is based on a universal application of the trace semantics: $\kripke$ \emph{satisfies} $\varphi$, denoted by $\kripke \models \varphi$, iff for all traces $\pi \in \Pi(\kripke): \pi,0 \models \varphi$. The \emph{language} $\Lang(\varphi)$ is the set of all traces satisfying the \ltl formula $\varphi$.

The \emph{epistemic temporal logic \kltl}~\cite{FaginHMV1995} extends \ltl with a knowledge modality that expresses the knowledge of an agent $a \in \ags$, i.e., it adds the rule $\K_a \, \varphi$ to Grammar~\ref{gram:ltl}. The semantics of a \kltl formula is as for LTL, but additionally refers to an extended transition system $\ekripke = (\kripke,\Omega)$. For the epistemic operator $\K_a$, we have:
\begin{alignat*}{2}
	&\ekripke,\pi,i \models \K_a \, \varphi	&~\text{ iff }~	& \forall \pi' \in \Pi(\kripke): (\Omega_a(\pi)[0,i] =\\ & & & \Omega_a(\pi')[0,i]) \rightarrow \ekripke,\pi',i \models \varphi \enspace .
\end{alignat*}
Hence, agent $a$ has knowledge of some property $\varphi$ on a trace $\pi$ at point $i$, expressed through the formula $\K_a \, \varphi$, if this property holds on all traces that are indistinguishable for $a$ from $\pi$ up to this point. 
This corresponds to the so-called synchronous perfect recall semantics~\cite{MeydenS99,HalpernMV04}, since the agents can distinguish prefixes of different length and based on divergence at any point in the past.  Similar to LTL, we have $\ekripke \models \varphi$ iff for all traces $\pi \in \Pi(\kripke): \ekripke,\pi,0\! \models \varphi$.

\begin{figure}[t]
		\centering
  \begin{tikzpicture}[->,shorten >=1pt,semithick,auto,node distance=3.5cm, on grid,initial text=,
			every state/.style={minimum size=27pt,inner sep=1pt}]
         \node[state](1){init};
         \node[state,left = of 1](2){win$_1$};
         \node[state,below = of 1,yshift=2em](3){win$_2$};
         \node[state,right = of 1](4){win$_3$};
         \node[above left = 1 and 0.7 of 1](i){};
         \path[->,draw](1) edge[loop above,looseness=12] node{$\lnot o \lor (\lnot\mathit{b}_1 \land \lnot\mathit{b}_2 \land \lnot\mathit{b}_3)$} (1)
         (1) edge[bend right=9] node[swap,pos=0.5]{$o \land \mathit{b}_1 \mid \boldsymbol{E}$} (2)
         (2) edge[bend right=9] node[swap,pos=0.5]{$\lnot o \, | \, \{\mathit{w}_1\}$} (1)
         (2) edge[loop below,looseness=12] node{$o$} (2)
         (1) edge[bend left=9] node[pos=0.6]{$o \land \mathit{b}_2 \mid \boldsymbol{E}$} (3)
         (3) edge[bend left=9] node[pos=0.4]{$\lnot o \, | \, \{\mathit{w}_2\}$} (1)
         (1) edge[bend left=9] node[pos=0.5]{$o \land \mathit{b}_3 \mid \boldsymbol{E}$} (4)
         (4) edge[bend left=9] node[pos=0.5]{$\lnot o \, | \, \{\mathit{w}_3\}$} (1)
         (3) edge[loop left,looseness=12] node{$o$} (3)
         (4) edge[loop below,looseness=12] node{$o$} (4);
         \path[->,draw](i) edge[]  (1);
         \end{tikzpicture}
         \caption{A multi-agent system modeling a Dutch auction with three agents bidding through actions $b_1,b_2,b_3$ and one auctioneer opening and closing the auction with action $o$. The winner is announced through outputs $w_1,w_2,w_3$. We consider three versions of the auction in this paper. In the first version $\mathcal{A}_\mathit{blind}$ an agent $i$ observes only their own action, output, and the auctioneer: $b_i,w_i,o$. In the second version $\mathcal{A}_\mathit{public}$ an agent $i$ additionally observes all actions $\act$. For both of these versions, the explanatory output $\boldsymbol{E}$ is empty, while we have $\boldsymbol{E} = \{e\}$ for the third version $\mathcal{A}_\mathit{explain}$, which otherwise is like $\mathcal{A}_\mathit{blind}$ except $e$ is observable by all agents.}
    \label{fig:dutch}
\end{figure}
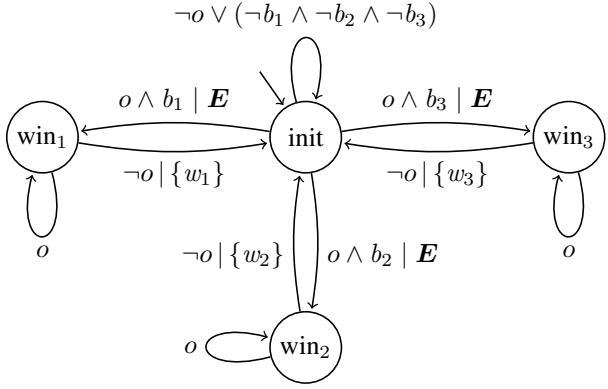

\subsubsection{Temporal Causality.} Since multi-agent systems are sequential processes where the timing of actions is causally relevant, we adapt temporal causality~\cite{FinkbeinerFMS24} to express causal dependencies between temporal properties. This utilizes counterfactual reasoning based on similarity~\cite{Lewis73}. Similarity is defined with a trace-wide extension of the \emph{symmetric difference} $A \oplus B = (A \setminus B) \cup (B \setminus A)$ of two sets $A,B$. For two traces $\pi,\pi' \in (2^\ap)^\omega$ and some $A \subseteq \AP$, we define $\pi \oplus_A \pi' = \{(a,i) \in A \times \mathbb{N} \mid a \in \pi[i] \oplus \pi'[i] \}$. For three traces $\pi,\pi',\pi'' \in (2^\ap)^\omega$ and $A \subseteq \act$ we say $\pi$ is at least as similar to $\pi'$ as $\pi''$ over $A$, denoted with $\pi \leq_{\pi'}^{A} \pi''$, iff $(\pi \oplus_A \pi') \subseteq (\pi'' \oplus_A \pi')$.

We now have everything we need to define temporal causality: The \emph{cause} for some temporal property $\psi$ at time point $i$ on a trace $\pi$ of system $\ekripke = (\kripke,\Omega)$ with respect to $A \subseteq \act$ is the following temporal property:
\begin{align*}
    \mathit{Cause}(\psi,\pi,i,A) = \{\,&\pi'  \in (2^{A})^\omega \mid \forall \pi'' \in \Pi(\kripke).\\ &(\pi'' \leq_\pi^{A} \pi' \land \pi =_{(\act\setminus A)} \pi'')\\ &\rightarrow \ekripke,\pi'',i \models \psi \,\}\enspace.
\end{align*}
%iff $\mathit{Cause}(\psi,\pi,i,A)$ and its complement are non-empty. We require the former because the empty set indicates nondeterminism on $\pi$ which means there exists no cause, and we require the latter because an empty complement means there is no $\leq^A_\pi$-accessible counterfactual trace $\pi''$ that does not satisfy the effect, i.e., no $\pi''$ such that $\pi \leq^A_\pi \pi''$ and $\pi'' \nmodels \psi$.

A temporal cause describes the largest set of action sequences satisfying the effect $\psi$ that is downward closed in $(\Pi(\kripke),\leq_\pi^A)$, i.e., all sequences with traces that satisfy the effect $\psi$ such that all at least as similar traces also satisfy $\psi$. Causes capture semantically what actions of a trace need to be changed to negate the effect. The parameter $A$ allows to constrain which actions can be changed and which actions are fixed between the traces. A cause may be described symbolically as the language of a temporal logic formula.

\begin{example}\label{ex:prelim}
Consider again system $\mathcal{A}_\mathit{explain}$ in Figure~\ref{fig:dutch}, effect $\psi = \LTLeventually w_1$ and trace $\pi = \{o,b_1,e\}\{o\}\{o,b_1\}\{w_1\}\{\}^\omega$. We have that $\mathit{Cause}(\psi,\pi,0,\{b_1\})$ is the language $\Lang(b_1 \lor \LTLnext \LTLnext b_1)$. This is because all traces that have action $b_1$ at either the first or third point and are equal to $\pi$ on all other actions satisfy the effect $\psi$. Other traces that satisfy $\psi$ such as $\pi' = \{o\}\{o,b_1,e\}\{o\}\{w_1\}\{\}^\omega$ are less similar to $\pi$ with respect to $\{b_1\}$ than $\{o\}\{o\}\{o\}\{\}^\omega$, which does not satisfy $\psi$, and hence these are not included in $\mathit{Cause}(\psi,\pi,0,\{b_1\})$.
\end{example}

\section{From Explanations to Explainability}\label{sec:motivation}

In this section, we give a high-level overview of this work. We particularly focus on delimiting the concepts of (individual) \emph{explanations}, the information flow-based system requirement we call \emph{explainability}, and how our work allows to identify tradeoffs between explainability and privacy.

We illustrate these concepts with the multi-agent systems modeling several versions of a \emph{Dutch auction} depicted in Figure~\ref{fig:dutch}. In a Dutch auction, there are a number of bidders that compete for a resource, and an auctioneer that opens and closes the auction. The auctioneer sets an initial price and decrements the price in every time step until reaching a lower limit price. Participants can place their bids at any time point, with the first bidder in a given auction cycle winning the resource. The systems are nondeterministic for the case that multiple agents place the first bid in a given round. The different versions of the auction differ with respect to the explainability and privacy guarantees that they provide, as we outline in the following.

\subsection{Explanations}\label{subsec:explanations}
In case a bidder does not win an auction, they may be interested in an explanation for this outcome. We consider explanations that answer counterfactual queries, e.g.: What did the agent need to do differently to obtain the desired outcome? The answer to such a query may include the temporal behavior of the agent, e.g., bidding \emph{earlier}. Hence, we use temporal causes (cf.\ Section~\ref{sec:prelims}) as the semantic content of an answer, i.e., as explanantia. Consider a trace $\pi$ of the system $\mathcal{A}_\mathit{blind}$ where Bidder~1 loses the first cycle to Bidder~2.

\begin{center}
\setlength{\tabcolsep}{5.5pt}
\begin{tabular}{c c | c | c | c | c}
 Auctioneer: & $\{o\}$ &  $\{o\}$ &  $\{o\}$ &   $\{\}$ & $\{\}^\omega$ \\ [0.5ex] 
 Bidder 1: & $\{\}$ &  $\{\}$ &   $\{b_1\}$ & $\{\}$ & $\{\}^\omega$ \\  [0.5ex] 
 Bidder 2: & $\{\}$ &  $\{b_2\}$ &  $\{\}$ &  $\{w_2\}$ & $\{\}^\omega$ \\  [0.5ex] 
 Bidder 3: & $\{\}$ &  $\{\}$ &  $\{b_3\}$ &  $\{\}$ & $\!\{\}^\omega$
\end{tabular}
\end{center}
The auction cycle is open while the auctioneer chooses the opening action $o$, in this way modeling them setting the initial and limit prices. During these first three time points, Bidder 2 bids at the second time point with action $b_2$ and the remaining agents bid one step later with actions $b_1$ and $b_3$.

If we restrict counterfactual actions to range only over the actions of Bidder~1 themselves, it is easy to see that bidding at either the first or second time point would have allowed them to win the auction cycle, although the latter only through an advantageous resolution of nondeterminism on the action sequence. We express the counterfactual dependency between $\lnot w_1$ and its cause as follows: 
$$\varphi_{\mathit{cf}} = (\LTLprevious^2 ( \lnot b_1 \land \LTLprevious \lnot b_1))\, \smash{\xrsquigarrow{\{b_1\}}}\, \lnot w_1 \enspace .$$
The formula states that the minimal changes to the  execution to flip the truth of outcome $\lnot w_1$ at the fourth time point with all actions fixed but $\{b_1\}$ also flip the truth of $\LTLprevious^2 ( \lnot b_1 \land \LTLprevious \lnot b_1)$, i.e., Bidder~1 not bidding at two and three time points before. We use an $i$ superscript to shorten sequences of length $i$ of the same operator.

The goal of an explanation for $\lnot w_1$ is to change the epistemic state of Bidder 1 from not knowing $\varphi_{\mathit{cf}}$ to knowing $\varphi_{\mathit{cf}}$, i.e., we want to ensure that $\K_{\text{Bidder\,1}}(\varphi_{\mathit{cf}})$ holds. This knowledge depends on the observations Bidder 1 can make during the execution of $\pi$: They have knowledge only of formulas that hold on all indistinguishable executions at this time point. In the blind auction $\mathcal{A}_\mathit{blind}$ where Bidder 1 can only see their own bids and the auctioneer, the execution of $\pi$ is indistinguishable from the execution of $\pi'$ where Bidder~2 bids with the following sequence.

\begin{center}
\setlength{\tabcolsep}{7.5pt}
\begin{tabular}{c c | c | c | c | c }
 Bidder 2: & $\{b_2\}$ &  $\{\}$ & $\{\}$ &  $\{w_2\}$ &  $\{\}^\omega$
\end{tabular}
\end{center}
On this  execution, the cause for $\lnot w_1$ at the fourth time point is reduced to $\LTLprevious^3 \lnot b_1$. Hence, Bidder 1 does not know the cause for $\lnot w_1$ in $\mathcal{A}_\mathit{blind}$ at this point, i.e., $\K_{\text{Bidder\,1}}(\varphi_{\mathit{cf}})$ does not hold. What can the auction system do such that Bidder~1 gains knowledge of the cause for $\lnot w_1$? The key lies in providing additional observations to the agent that serve as \emph{explanations}. One possible set of explanatory observations for $\lnot w_1$ is $\{b_2,b_3\}$, i.e., the bidding actions of all other agents. In system $\mathcal{A}_\mathit{public}$ where they are observable, Bidder~1 can distinguish $\pi$ from all other traces and hence $\K_{\text{Bidder\,1}}(\varphi_{\mathit{cf}})$ holds on $\pi$ at the fourth time point.

\subsection{Explainability}\label{subsec:explainability}

The previous section has outlined how explanations can transport knowledge about counterfactual causes at a specific time point in a given  execution. To go from explanations to explainable systems we need to define under which circumstances this knowledge should be available to which agents. In the Dutch auction system, we may for instance require that Bidder~1 knows the cause for a loss whenever the auction closes. We can again use temporal operators to express these timing requirements, as we already did in the previous section to describe that the temporal behavior of Bidder~1 is a causal antecedent for their loss at the fourth time point. A complicating factor is that a given effect can have an arbitrary number of causes at different time points. For instance, shifting $\pi$ by duplicating its first time point includes an additional action in the cause for the loss of Bidder~1, which can be spun arbitrarily further. Hence, it does not suffice to use a finite number of explicit causal antecedents in a system-wide explainability requirement. We solve this by extending the logic with second-order quantifiers that allow to quantify over sets of traces. Such a set can be instantiated with different causes at different time points. Our considerations on timing and cause quantification result in the following requirement for the Dutch auction systems:
$$ \LTLglobally \big((\lnot w_1 \land \lnot o \land \LTLprevious o ) \rightarrow \exists X .\, \K_{\text{Bidder\,1}}(X \xrsquigarrow{\{b_1\}} \lnot w_1)\big) \enspace . $$
The requirement states that at all future time points (enforced through the temporal operator $\LTLglobally\,$), whenever the auction was open in the previous time point, is now closed, and Bidder~1 has not won the cycle, then there is a property $X$ such that Bidder~1 knows that $X$ is the cause for their loss, i.e., $X$ is the cause on all traces that are indistinguishable for Bidder~1. 
The cause $X$ is constrained to action $b_1$ of Bidder~1. It is an instance of ICE as introduced in Section~\ref{sec:intro}.

It is easy to see that the auction system $\mathcal{A}_\mathit{blind}$ does not satisfy ICE, since the fourth time point on  execution $\pi$ as discussed in Section~\ref{subsec:explanations} is a counterexample to the system-wide requirement. In $\mathcal{A}_\mathit{public}$, Bidder~1 observes everything about an execution except how nondeterminism is resolved when multiple agents bid first simultaneously. In these instances, the empty set is a valid causal antecedent on all indistinguishable traces and hence $X$ can be instantiated with it. This effectively means that Bidder~1 either knows which actions would have resulted in them winning the auction, or knows that their actions were already optimal and their loss is attributable to unmodeled actions such as randomness. Hence, we have that $\mathcal{A}_\mathit{public}$ satisfies ICE. %Our semantics deliberately take a relaxed view on nondeterminism since it is a convenient modeling aspect. 

%In this section, we have proposed a number of formal explainability requirements specified in the logic \yltlso. While this is by no means an exhaustive taxonomy of such requirements, it shows the value of formal languages for specifying the intricacies of explainability, and is, to the best of our knowledge, the first formal effort going beyond defining individual explanations (cf. Section~\ref{subsec:explanations}). In the remainder of this paper, we will study the logic \yltlso in more detail. We propose an algorithm for the model checking problem on finite-state multi-agent systems (cf. Section~\ref{}) and present a prototype implementation that we evaluate on multi-agent systems (cf. Section~\ref{}) similar in spirit to the auction example from this section. 

\subsection{Balancing Explainability and Privacy}\label{subsec:privacy}
Although $\mathcal{A}_\mathit{public}$ satisfies the explainability requirement, this comes at the cost of exposing all actions of the other agents. This is clearly unsatisfactory, as systems may place privacy requirements alongside explainability. For instance, we may require that Bidder~1 never knows whether Bidder~2 places a bid at a given time point. We can express this system-wide  requirement utilizing as $ \LTLglobally \big(\lnot \K_{\text{Bidder\,1}}(b_2)\big)$,
which formally requires that Bidder~1 should never be able to distinguish a given  execution from another where the value of $b_2$ is flipped. We call this parametric privacy notion $b_2$\emph{-privacy}.
In terms of privacy, the desirability of $\mathcal{A}_\mathit{public}$ and $\mathcal{A}_\mathit{blind}$ is now exactly opposite as for explainability: $\mathcal{A}_\mathit{public}$ clearly does not satisfy it as Bidder~1 can directly observe $b_2$. $\mathcal{A}_\mathit{blind}$ does satisfy it because Bidder~1 cannot distinguish between the actions of Bidders~2 and~3. 

It turns out that making the set $\{b_2,b_3\}$ observable was too impetuous and a smaller set would have been sufficient without sacrificing privacy: System $\mathcal{A}_\mathit{explain}$ adds the output $e$, which gets broadcast to all agents whenever the first bid of an open auction comes in. On the one hand, this still provides the information flow required to identify the cause for a loss of Bidder~1, such that ICE is satisfied by the system. On the other hand, this hides who placed the highest bid, such that the system also satisfies  $b_2$-privacy. 

\section{A Logic for Explainability Requirements}\label{sec:main}

In this section, we dig deeper into the formal details of our logic for expressing explainability requirements, which we call \yltlso.  We start with outlining its syntax and semantics in Section~\ref{sec:synsem}. We then illustrate how the logic can be used to specify a number of different explainability requirements in Section~\ref{sec:reqs}. Last, we describe a model-checking algorithm for the logic in Section~\ref{sec:mc}.

\subsection{Syntax and Semantics of \yltlso}\label{sec:synsem}

\yltlso is an extension of \kltl, i.e., linear temporal logic with the knowledge modality $\K_a$. Hence, the syntax and semantics of all shared operators are as described in Section~\ref{sec:prelims}.

\paragraph{Syntax.} \yltlso extends \kltl with second-order quantification over sets of traces and allows these sets to be used in \emph{causal predicates} \smash{$X$ \raisebox{-1pt}{$\xrsquigarrow{A}$} $\varphi$}, which require $X$ to be the cause for $\varphi$ over a set of actions $A$ at the current time point of a given trace. We assume a set of second-order variables $\sov$ be given with the set of atomic propositions $\ap$ and agents $\ags$. The syntax of \yltlso is as follows:
\begin{align*}
	\varphi \Coloneqq \; &p  \mid \neg \varphi \mid \varphi \land \varphi \mid \LTLnext \varphi \mid \varphi \LTLuntil \varphi \mid \LTLprevious \varphi \mid \varphi \U^- \varphi \mid\\ &\K_a \, \varphi \mid \exists X . \, \varphi \mid X \xrsquigarrow{A} \varphi \enspace ,
\end{align*}
where $p \in \ap$ is an atomic proposition, $a \in \ags$ is an agent, $X \in \sov$ is a second-order variable, and $A \subseteq \ap$ is a subset of atomic propositions. \yltlso includes the same derived operators as \ltl (cf. Section~\ref{sec:prelims}), as well as universal second-order quantification derived as $\forall X. \, \varphi \equiv \lnot \exists X. \, \lnot \varphi $. We say a \yltlso formula is \emph{well-formed} iff all of its subformulas \smash{$X$ \raisebox{-1pt}{$\xrsquigarrow{A}$} $\varphi$}
are in the scope of an existential quantifier $\exists X$, that is, all second-order variables are bound to a quantifier. 
%appear as subformulas in formulas of the form $\exists X . \, \varphi$ with the same $X \in \sov$.

\paragraph{Semantics.} The semantics of all operators shared between \yltlso and \kltl are as defined in Section~\ref{sec:prelims}, but they additionally refer to a second-order assignment $\Theta : \sov \mapsto \mathbb{P}(\Sigma^\omega)$, which is a mapping from second-order variables to sets of traces. This assignment plays a central part in the semantics of the new operators in \yltlso:
\begin{alignat*}{2}
	&\ekripke,\pi,i,\Theta \models \exists X .	\,\varphi &~\text{ iff }~	&\exists Y \subseteq (2^\AP)^\omega\text{ s.t. }\\& & &\ekripke,\pi,i,\Theta[X \mapsto Y] \models \varphi,\\
	&\ekripke,\pi,i,\Theta \models X \xrsquigarrow{A} \varphi	&~\text{ iff }~	&
\Theta(X) = \mathit{Cause}(\varphi,\pi,i,A).
\end{alignat*}
Hence, 
the semantics of 
the second-order quantification $\exists X .	\,\varphi$ requires that there exists a set of traces, assigned to the second-order variable $X$, such that the subformula $\varphi$ is satisfied. In the subformula, this second-order variable may be used in an arbitrary number of causal predicates \smash{$X$ \raisebox{-1pt}{$\xrsquigarrow{A}$} $\varphi$} and it has to qualify as a cause in all of them. This essentially allows to succinctly specify equality of causes at an arbitrary (even infinite) number of anchor points in the scope of a single quantifier. For an extended transition system $\ekripke$ and a well-formed \yltlso formula $\varphi$, we have $\ekripke \models \varphi$ iff for all traces $\pi \in \Pi(\kripke): \ekripke,\pi,0,\frak{E} \models \varphi$, where $\frak{E}$ denotes an empty second-order assignment that maps all variables to $\bot$. The \yltlso \emph{model-checking} problem is to decide whether $\ekripke \models \varphi$ for such an $\ekripke$ and $\varphi$.

\subsection{Formalizing Explainability Requirements}\label{sec:reqs}

Besides ICE, \yltlso can be used to define other explainability requirements. We establish some results on entailment between these requirements and other notions.

Outcomes may not only be explainable to an agent by their own actions, but also through the actions of other agents: For instance, Bidder 1 losing an auction cycle also depends causally on the actions of the other bidders. Information flow about such causes can be specified as follows. 

\begin{definition}[External Causal Explainability]\label{def:external}
An effect $\psi$ is explainable for agent $a$ according to External Causal Explainability (ECE) in some system $\ekripke$, iff $\ekripke$ satisfies the following property:
$$\LTLglobally \big(\psi \rightarrow \exists X .\, \K_a (X \xrsquigarrow{\mathit{Act} \setminus \mathit{Act}(a)} \psi)\big) \enspace .$$
\end{definition}

Hence, ECE requires simply changing the actions that a cause must range over from $\mathit{Act}(a)$, i.e., the actions of agent~$a$, to the actions of all other agents: $\mathit{Act} \setminus \mathit{Act}(a)$.

\begin{example}
    Consider the auction system $\mathcal{A}_\mathit{explain}$ (cf.\ Figure~\ref{fig:dutch}) and the following trace:

\begin{center}
\setlength{\tabcolsep}{5.5pt}
\begin{tabular}{c | c | c | c | c | c}
 $\{o\}$ &  $\{o,b_2,e\}$ &  $\{o,b_2,b_3\}$ &  $\{o,b_1\}$ &  $\{w_2\}$ & $\{\}^\omega$
\end{tabular}
\end{center}
The temporal cause for $\lnot w_1$ at the fifth time point is now composed of Bidder 2 bidding at the second and third time point and Bidder 3 bidding at the third time point:
$$(\LTLprevious^2 ( b_2 \lor b_3 \lor \LTLprevious b_2))\, \smash{\xrsquigarrow{\{b_2,b_3,o\}}}\, \lnot w_1 \enspace .$$
\end{example}
 
Next, the combination of ICE and ECE requires full knowledge about any causal dependencies.

\begin{definition}[Full Causal Explainability]\label{def:general}
An effect $\psi$ is explainable for agent $a$ according to Full Causal Explainability (FCE) in a system $\ekripke$, iff $\ekripke$ satisfies the following property:
$$\LTLglobally \big(\psi \rightarrow \exists X .\, \K_a (X \xrsquigarrow{\mathit{Act}} \psi)\big) \enspace .$$
\end{definition}

Intuitively, adding additional atomic propositions to the causal predicates effectively requires more information to flow to agent~$a$. We can show formally this formally.

\begin{proposition}
    ICE and ECE are weaker criteria than FCE, i.e., we have for all systems $\ekripke$ that $\ekripke \models \text{FCE}$ implies $\ekripke \models \text{ICE}$ and $\ekripke \models \text{ECE}$.
\end{proposition}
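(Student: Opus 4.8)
The plan is to observe that FCE, ICE, and ECE are the same formula up to the action set $A$ labelling the causal predicate, with $\mathit{Act}(a)\subseteq\mathit{Act}$ and $\mathit{Act}\setminus\mathit{Act}(a)\subseteq\mathit{Act}$, and to reduce everything to a statement about causes over subsets of $\mathit{Act}$. First I would reformulate the epistemic quantification. Since the semantics forces any witness of a causal predicate at $(\pi',i)$ to equal $\mathit{Cause}(\psi,\pi',i,A)$, the guarded subformula $\exists X.\,\K_a(\ldots)$ holds at an anchor $(\pi,i)$ iff the map $\pi'\mapsto\mathit{Cause}(\psi,\pi',i,A)$ is constant over the indistinguishability class $[\pi]^i_a=\{\pi'\in\Pi(\kripke)\mid \Omega_a(\pi)[0,i]=\Omega_a(\pi')[0,i]\}$, where $A$ is the respective action set. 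Unfolding $\LTLglobally$ and the system-level semantics, $\ekripke\models\text{FCE}$ then reads: for all $\pi\in\Pi(\kripke)$ and all $i$ with $\ekripke,\pi,i\models\psi$, the full cause $\mathit{Cause}(\psi,\cdot,i,\mathit{Act})$ is constant on $[\pi]^i_a$. Because the guard $\psi$ is shared by all three requirements, it suffices to prove the same constancy statement for $A=\mathit{Act}(a)$ (giving ICE) and for $A=\mathit{Act}\setminus\mathit{Act}(a)$ (giving ECE), at exactly the same anchors. In what follows $\pi,\pi',\pi''$ denote traces in the class, $\sigma$ a candidate cause element in $(2^A)^\omega$, and $\rho$ ranges over counterfactual system traces.

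The engine is a restriction lemma expressing an $A$-cause as a slice of the full cause. For $\sigma\in(2^A)^\omega$ let $\hat\sigma^{\pi'}\in(2^{\mathit{Act}})^\omega$ agree with $\sigma$ on $A$ and with $\pi'$ on $\mathit{Act}\setminus A$. Comparing the two instances of the definition of $\mathit{Cause}$, I would show that the counterfactuals $\rho$ constrained by $\rho=_{\mathit{Act}\setminus A}\pi'$ and $\rho\oplus_A\pi'\subseteq\sigma\oplus_A\pi'$ are exactly those constrained by $\rho\oplus_{\mathit{Act}}\pi'\subseteq\hat\sigma^{\pi'}\oplus_{\mathit{Act}}\pi'$: since $\hat\sigma^{\pi'}$ and $\pi'$ coincide off $A$, the symmetric difference over $\mathit{Act}$ collapses onto $A$ and already forces agreement off $A$. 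Hence $\sigma\in\mathit{Cause}(\psi,\pi',i,A)$ iff $\hat\sigma^{\pi'}\in\mathit{Cause}(\psi,\pi',i,\mathit{Act})$. I expect this step to be routine symmetric-difference bookkeeping.

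The crux is transferring constancy from the full cause to the $A$-cause, because the slice is taken at $\pi'$'s own values on $\mathit{Act}\setminus A$, and two indistinguishable traces may differ there. Fixing $\pi',\pi''\in[\pi]^i_a$, FCE gives a common full cause $Y=\mathit{Cause}(\psi,\pi',i,\mathit{Act})=\mathit{Cause}(\psi,\pi'',i,\mathit{Act})$. By the restriction lemma applied at base $\pi'$ and at base $\pi''$, it then suffices to prove $\hat\sigma^{\pi'}\in Y\iff\hat\sigma^{\pi''}\in Y$. Unfolding $Y$ as $\mathit{Cause}(\psi,\pi'',i,\mathit{Act})$ for both memberships, I would compute $\hat\sigma^{\pi''}\oplus_{\mathit{Act}}\pi''=\sigma\oplus_A\pi''$, whereas $\hat\sigma^{\pi'}\oplus_{\mathit{Act}}\pi''$ additionally contains the differences between $\pi'$ and $\pi''$ off $A$; consequently the set of counterfactuals $\rho$ admitted by $\hat\sigma^{\pi''}$ is contained in the set admitted by $\hat\sigma^{\pi'}$. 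As membership in the full cause asserts that every admitted $\rho$ satisfies $\psi$, and this property is monotone under shrinking the admitted set, $\hat\sigma^{\pi'}\in Y$ implies $\hat\sigma^{\pi''}\in Y$, and symmetry gives the equivalence. Chaining the equivalences yields $\mathit{Cause}(\psi,\pi',i,\mathit{Act}(a))=\mathit{Cause}(\psi,\pi'',i,\mathit{Act}(a))$, hence ICE, and the verbatim argument with $A=\mathit{Act}\setminus\mathit{Act}(a)$ gives ECE. I expect the main obstacle to be exactly this ball-containment computation: it requires evaluating the symmetric differences against the shifted base $\pi''$ rather than $\pi'$, and it is worth emphasising that the indistinguishability hypothesis enters only to supply the common cause $Y$ and plays no role in the set comparison itself.
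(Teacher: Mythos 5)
Your proof is correct and follows essentially the same route as the paper's: reduce the claim to monotonicity in the action set ($A \subseteq B$ with $B = \act$), and express the $A$-cause as the slice of the $B$-cause obtained by fixing the trace's own actions on $B \setminus A$ (your restriction lemma is the paper's claim~(3), stated with the embedding $\sigma \mapsto \hat\sigma^{\pi'}$ made explicit). The one place you go beyond the paper is your ``crux'' step: since the slicing base changes from $\pi'$ to $\pi''$ and indistinguishable traces may disagree on the unobservable actions in $\act \setminus A$, equality of the full causes does not immediately transfer to equality of the slices --- the paper's proof passes over this in the sentence ``With (3), we then also have\dots'', whereas your symmetric-difference containment $\hat\sigma^{\pi''} \oplus_{\act} \pi'' \subseteq \hat\sigma^{\pi'} \oplus_{\act} \pi''$ together with the monotonicity of cause membership under shrinking the admitted counterfactual set is exactly the argument needed to close that step.
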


\begin{proof}
    It suffices to show that for any sets $A\subseteq B$ and usual semantic parameters, (1) \smash{$\ekripke,\pi,i,\Theta \models \exists X .\, \K_a (X$ \raisebox{-1pt}{$\xrsquigarrow{B}$} $\psi)$} implies (2) \smash{$\ekripke,\pi,i,\Theta \models \exists X .\, \K_a (X$ \raisebox{-1pt}{$\xrsquigarrow{A}$} $\psi)$}. From the definition of temporal cause (cf.\ Section~\ref{sec:prelims}) it follows for any trace $\pi'$ that (3) $\mathit{Cause}(\psi,\pi',i,A) = \{ \pi'' \in \mathit{Cause}(\psi,\pi',i,B) \mid \pi' =_{(\act\setminus A)} \pi'' \}$, i.e., the cause over $A$ is exactly the subset of the cause over $B$ that is fixed over the larger action set $\act\setminus A$. From (1) we know that $\mathit{Cause}(\psi,\pi,i,B) =   \mathit{Cause}(\psi,\rho,i,B)$ for any trace $\rho$ indistinguishable to $\pi$ up to $i$, i.e., where $\Omega_a(\pi)[0,i] = \Omega_a(\rho)[0,i]$. With (3), we then also have for the same trace $\rho$ that $\mathit{Cause}(\psi,\pi,i,A) =   \mathit{Cause}(\psi,\rho,i,A)$ and (2) follows.
\end{proof}

\paragraph{Explanations Imply Knowledge.} The previously introduced definitions of explainability such as FCE (cf.\ Definition~\ref{def:general}) do not explicitly require agent $a$ to know that the effect $\psi$ holds. This is a deliberate design decision to keep the specifications succinct, as knowledge of the outcome is implied by knowledge of the counterfactual dependency, except in the case of nondeterminism on the action sequence.
\begin{proposition}\label{prop:knowledge}
    If a deterministic system $\ekripke$, a trace $\pi \in \Pi(\ekripke)$, a time point $i \in \mathbb{N}$, some $A \subseteq \ap$, and an arbitrary second-order assignment $\Theta$ satisfy
    $$\psi \land \exists X .\, \K_a (X \xrsquigarrow{A} \psi) \enspace, \enspace \text{ then also } \enspace \K_a (\psi) \enspace .$$
\end{proposition}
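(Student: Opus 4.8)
The plan is to unfold both the hypothesis and the goal into their epistemic and counterfactual components and then transport a single witness---the actual action sequence of $\pi$---across all indistinguishable traces. First I would expand the goal: $\ekripke,\pi,i \models \K_a(\psi)$ holds precisely when $\ekripke,\pi',i \models \psi$ for every $\pi' \in \Pi(\kripke)$ with $\Omega_a(\pi)[0,i] = \Omega_a(\pi')[0,i]$. Next I would expand the hypothesis $\exists X.\, \K_a(X \xrsquigarrow{A} \psi)$: by the semantics of the quantifier and of $\K_a$, there is a single set $Y$ with $\mathit{Cause}(\psi,\pi',i,A) = Y$ for every such indistinguishable $\pi'$; taking $\pi' = \pi$ in particular yields $Y = \mathit{Cause}(\psi,\pi,i,A)$. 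The whole argument thus reduces to showing that $\psi$ holds at time $i$ on each of these traces, given that they all share the same cause $Y$ and that $\psi$ holds on $\pi$ itself (the first conjunct). The assignment $\Theta$ plays no role, since $\psi$ is an ordinary temporal effect and $X$ is overwritten by $Y$.

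The crux, and the only place determinism enters, is the following claim: letting $\alpha \in (2^A)^\omega$ be the $A$-projection of $\pi$ (i.e., $\alpha[j] = \pi[j] \cap A$), we have $\alpha \in \mathit{Cause}(\psi,\pi,i,A)$. To see this I would unfold membership in the cause: $\alpha$ belongs to it iff every $\pi'' \in \Pi(\kripke)$ with $\pi'' \leq_\pi^A \alpha$ and $\pi =_{(\act \setminus A)} \pi''$ satisfies $\psi$ at $i$. But $\pi'' \leq_\pi^A \alpha$ means $\pi'' \oplus_A \pi \subseteq \alpha \oplus_A \pi = \emptyset$, so $\pi''$ agrees with $\pi$ on $A$; together with agreement on $\act \setminus A$ this forces agreement on all of $\act$, and here determinism guarantees that the action sequence determines the trace, so $\pi'' = \pi$. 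Hence the only admissible counterfactual witness is $\pi$ itself, on which $\psi$ holds, establishing $\alpha \in \mathit{Cause}(\psi,\pi,i,A) = Y$. I expect this step to be the main obstacle, since it is exactly where nondeterministic resolution of the action sequence would break the argument: without determinism a distinct trace with the same actions could falsify $\psi$ and exclude $\alpha$ from the cause, which is precisely the caveat the surrounding remark mentions.

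Finally I would transport this witness. Fix any $\pi'$ indistinguishable from $\pi$ up to $i$. Since $\mathit{Cause}(\psi,\pi',i,A) = Y$ and $\alpha \in Y$, we get $\alpha \in \mathit{Cause}(\psi,\pi',i,A)$. Unfolding this membership against the witness $\pi'' := \pi'$ itself, the conditions $\pi' \leq_{\pi'}^A \alpha$ (which holds because $\pi' \oplus_A \pi' = \emptyset$ is a subset of anything) and $\pi' =_{(\act \setminus A)} \pi'$ hold trivially, so the defining implication of the cause yields $\ekripke,\pi',i \models \psi$. As $\pi'$ was an arbitrary indistinguishable trace, this is exactly $\ekripke,\pi,i \models \K_a(\psi)$. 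Note that this last direction needs no determinism; the asymmetry between the two cause-membership arguments is what makes determinism necessary only for placing $\alpha$ in the cause in the first place.
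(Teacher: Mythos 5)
Your proposal is correct and follows essentially the same route as the paper's proof: use determinism to place the $A$-projection of $\pi$ into the (shared) cause, then instantiate the universal condition of cause membership with each indistinguishable trace itself to conclude $\psi$ holds there. Your unfolding of why $\pi'' \leq_\pi^A \alpha$ together with $\pi =_{(\act\setminus A)} \pi''$ forces $\pi'' = \pi$ is just a more explicit version of the paper's parenthetical remark.
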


\begin{proof}
    From the left side of the implication we know there is a set $T$ that can instantiate $X$, such that \smash{$\K_a (X $\raisebox{-1pt}{$\xrsquigarrow{A}$} $\psi)$} holds on $\pi$ at $i$. We have assumed $\pi \models \psi$, and since $\ekripke$ is deterministic there is no $\pi''$ such that $\pi'' \leq^A_{\pi} \pi$, $\pi'' =_{(\act\setminus A)} \pi$ and $\pi'' \nmodels \psi$ (only a trace with the same action sequence could be as similar to $\pi$ as $\pi$ itself). Hence, we have $\pi|_A \in T$, where $\pi|_A[i]$ is the projection of $\pi$ to $A$ such that  $\pi|_A[i] = \pi[i]\cap A$ for all $n \in \mathbb{N}$. Now, let $\pi'$ be any trace indistinguishable to $\pi$ up to $i$, i.e., $\Omega_a(\pi)[0,i] = \Omega_a(\pi')[0,i]$. From \smash{$\K_a (X$ \raisebox{-1pt}{$\xrsquigarrow{A}$} $\psi)$} we know that $T = \mathit{Cause}(\psi,\pi',i,A) \ni \pi|_A$, and since we trivially have that $\pi' \leq^A_{\pi'} \pi|_A$ and $\pi' =_{(\act\setminus A)} \pi'$, it holds that $\pi' \models \psi$ from the definition of a temporal cause (cf.\ Section~\ref{sec:prelims}), hence the claim follows.
\end{proof}
Thus, in a deterministic system an agent can only explain present facts that they have knowledge of, while in a nondeterministic system an agent may know that some fact could be caused by nondeterminism, in which case they may not be sure whether it actually holds on a given trace.

The resolution of nondeterminism can be included more directly in explanations. This requires modeling nondeterminism as an additional agent $n$, which intuitively flips a number of coins that then determine which previously nondeterminisitc transition is taken. By including the actions $\mathit{Act}(n)$ of this agent in the respective counterfactual predicates of the explainability requirements such as ICE, an explainable system is then required to transmit information about the outcomes of these coin flips. Notably, the system resulting from such a construction is deterministic, such that Proposition~\ref{prop:knowledge} applies.

\subsection{Model Checking YLTL$^2$}\label{sec:mc}

We now outline an algorithm for model checking finite-state multi-agent systems against \yltlso formulas. The central challenge is the second-order quantification ranging over sets of traces. Logics with unrestricted quantifiers of this kind are known to have an undecidable model-checking problem~\cite{BeutnerFFM23}. \yltlso restricts usage of second-order variables to causal predicates. The solution to such a causal predicate is uniquely determined for every anchor point, and this allows us to frame second-order quantification in \yltlso less as a search for a solution and more as a check for equality between the unique solutions at different anchor points. If a causal predicate appears in the scope of an epistemic or temporal operator, there may be an infinite number of such anchor points that need to be compared, which makes such a check non-trivial to achieve. In the proof of the following Theorem~\ref{thm:mc}, we show that the check can essentially be encoded through quantification over individual traces. We express this trace quantification in a logic that can quantify over fresh atomic propositions by restricting these to follow the dynamics of the system at hand. This idea has been used in a number of related results on model-checking logics with trace quantification, i.e., so-called \emph{hyperlogics}~\cite{ClarksonFKMRS14,BozzelliMP15}. However, none of these include second-order quantifiers.

\begin{theorem}[\yltlso Model Checking]\label{thm:mc}
    There is an algorithm to decide whether a given extended transition system $\ekripke = (\kripke,\Omega)$ satisfies a \yltlso formula $\varphi$, i.e., $\ekripke \models \varphi$.
\end{theorem}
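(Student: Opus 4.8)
The plan is to reduce the model-checking problem to the satisfiability (equivalently, truth) problem of \qptl, which is decidable, by translating $\ekripke \models \varphi$ into an equivalent closed \qptl formula over several copies of the atomic propositions, where every copy meant to range over executions of $\ekripke$ is constrained to follow the transition dynamics of the finite-state system (as in \qptlc). The crucial observation that makes this feasible is the one highlighted before the theorem: for every anchor point $(\pi,i)$ and every $A$, the set $\mathit{Cause}(\psi,\pi,i,A)$ is \emph{uniquely} determined. Consequently a causal predicate $X \xrsquigarrow{A} \psi$ is not an existential search but the equality test $\Theta(X) = \mathit{Cause}(\psi,\pi,i,A)$, and a second-order quantifier $\exists X$ serves only to force the (unique) causes demanded by the causal predicates in its scope to coincide. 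The translation therefore never materializes a set of traces; it only compares causes, and such comparisons are first-order, i.e.\ expressible with quantification over individual traces.

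First I would show that each semantic ingredient is captured by \qptl over system-consistent copies. Since $\ekripke$ is finite-state, I encode its states and labels with auxiliary propositions and a formula $\mathit{sys}$ asserting that a copy is an initial path of $\kripke$, so that ``there is / for all execution(s) $\pi'$ such that $\dots$'' becomes propositional quantification guarded by $\mathit{sys}$. Under synchronous perfect recall, $\K_a\,\vartheta$ translates to a universal copy $\pi'$ with $\mathit{sys}(\pi')$ that agrees with the reference trace on $\Omega(a)$ at every point up to the current one---written with the past operator $\LTLpastglobally \bigwedge_{p\in\Omega(a)}(p \leftrightarrow p')$---on which $\vartheta$ must hold. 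Membership of a candidate action sequence $\sigma$ (a copy restricted to $A$) in $\mathit{Cause}(\psi,\pi,i,A)$ unfolds directly from its definition into a universal copy $\pi''$ with $\mathit{sys}(\pi'')$, the similarity constraint $\leq_\pi^{A}$ and the fixing constraint $=_{(\act\setminus A)}$ rendered as global Boolean comparisons of the respective copies, implying $\psi$ at the current point. Equality of two causes $\mathit{Cause}(\psi,\pi_1,i_1,A)=\mathit{Cause}(\psi,\pi_2,i_2,A)$ then becomes a universal copy $\sigma$ over $A$ whose two membership tests are required to agree. Each of these uses only finitely many, boundedly alternating propositional quantifiers and so stays within \qptl.

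The heart of the argument, carried out by structural induction with subformulas translated innermost-first, is the elimination of $\exists X.\,\chi$. Because each causal predicate in $\chi$ is an equality test against a unique cause, a witnessing set $Y$ can influence $\chi$ only through which of these causes it equals, and $Y$ can equal at most one value among distinct causes. Hence $\exists X.\,\chi$ holds iff \emph{either} $\chi$ holds with every $X$-causal predicate set to $\false$ (choose $Y$ distinct from all demanded causes, which is always possible since the demanded causes never exhaust the space of candidate sets) \emph{or} there is a witness anchor realizing a cause $c_0$ such that $\chi$ holds when each $X \xrsquigarrow{A} \psi$ is replaced by ``the cause at the current point equals $c_0$''. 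The witness anchor is selected by one existentially quantified system copy together with a marked time point, and the per-occurrence replacement is exactly the cause-equality gadget of the previous paragraph. The substitution is performed in a single pass over the temporal and epistemic structure of $\chi$, so a predicate buried under $\LTLglobally$ or $\K_a$---which imposes the same equality at infinitely many anchor points---is handled uniformly without enumerating those points; nested quantifiers $\exists Z.\,\cdots$ are already plain \qptl subformulas by the time the surrounding quantifier is processed.

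Putting the pieces together, the top-level statement $\ekripke \models \varphi$ (the universal reading over $\pi \in \Pi(\kripke)$ at time $0$) becomes a closed \qptl formula guarded by $\mathit{sys}$ on the reference copy, and deciding it with the classical automata-theoretic decision procedure for \qptl yields the algorithm. I expect the main obstacle to be precisely the interaction of the second-order quantifier with the temporal and epistemic operators: a single $\exists X$ can constrain the causes at infinitely many anchor points, so the reduction must turn an apparent search over sets of traces into a finite-alternation comparison of uniquely determined causes. The witness-anchor characterization above, justified by uniqueness of causes, is what keeps this within the propositional quantification of \qptl; getting its case analysis right for negative occurrences and for predicates nested under several modalities---while faithfully encoding the counterfactual similarity order $\leq_\pi^{A}$ and the synchronous perfect-recall indistinguishability---is the delicate part, whereas the final appeal to decidability of \qptl is routine.
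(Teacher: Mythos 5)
Your proposal is correct in its essentials and follows the same overall architecture as the paper: a reduction to \qptl satisfiability in which system executions become propositionally quantified copies constrained to follow the transition relation, the knowledge operator becomes universal quantification over observation-equivalent initial copies, membership in $\mathit{Cause}(\psi,\pi,i,A)$ is unfolded into first-order trace quantification via the similarity order $\leq^A_\pi$, and the decisive insight in both cases is that uniqueness of causes turns $\exists X$ into an equality check rather than a search. Where you genuinely diverge is in the gadget that eliminates $\exists X$. The paper places a single universally quantified candidate trace $\rho$ at the location of the quantifier and requires a disjunction of two readings of the body --- one in which $\rho$ is a member of the cause at \emph{every} causal predicate for $X$, one in which it is a member at \emph{none} --- so that equality of all the demanded causes is expressed as uniformity of membership of an arbitrary candidate. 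You instead existentially select a witness anchor (a quantified system copy plus a marked time point) realizing a concrete cause $c_0$, replace each occurrence of $X \xrsquigarrow{A} \psi$ by a set-equality test against $c_0$, and add a separate branch in which every $X$-predicate is set to $\false$ (justified by a cardinality argument that some set avoids all demanded causes). Your gadget buys robustness under negated causal predicates --- e.g.\ $\exists X.\,(X \xrsquigarrow{A} \psi_1) \land \lnot(X \xrsquigarrow{B} \psi_2)$ correctly reduces to inequality of the two causes, whereas the paper's in-all/in-none disjunction as literally stated would demand that the two causes be complements --- at the cost of extra machinery the paper avoids: the marked-point jump needed to evaluate the witness cause at a non-current anchor, and a finite disjunction over which predicate occurrence defines $c_0$ (a detail you leave implicit). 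The paper's gadget is leaner for the positively occurring predicates that all the stated requirements (ICE, ECE, FCE) actually use, and it is what drives the paper's complexity accounting of one added quantifier alternation per negated branch; your version would need a comparable but slightly different accounting because every causal predicate occurrence carries its own universally quantified comparison trace.
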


\begin{proof}
    The claim follows from a reduction to the satisfiability problem of Quantified Propositional Temporal Logic (QPTL)~\cite{SistlaVW87}. QPTL extends LTL with quantifiers over fresh atomic propositions $p \in \AP$:% Its syntax is defined as follows:
    $$\varphi \Coloneqq p \mid \neg \varphi \mid \varphi \lor \varphi \mid \LTLnext \varphi \mid \LTLprevious \varphi \mid \varphi \U \varphi \mid \varphi \U^- \varphi \mid \exists p . \, \varphi \enspace ,$$
    The semantics of the shared fragment are exactly as for \ltl. The propositional quantification semantics are as follows: %has the following semantics:
    \begin{alignat*}{2}
	&\pi,i \models \exists p . \, \varphi	& \text{ iff }	& \exists \pi' \in (2^\ap)^\omega: \pi =_{\ap\setminus\{p\}} \pi' \text{ and } \pi',i \models \varphi .
\end{alignat*}
The \emph{satisfiability }problem of such a QPTL formula $\varphi$ asks whether there is a trace $\pi \in (2^\ap)^\omega$ such that $\pi,0 \models \varphi$ and is decidable~\cite{SistlaVW87}. 

We can reduce the \yltlso model checking to QPTL satisfiability via a translation function $\mathit{enc}$ that encodes the extended transition system $\ekripke = (\kripke,\Omega)$ and the \yltlso formula $\varphi$ into a QPTL formula. We combine several translations from Bozzelli, Maubert, and Pinchinat~\shortcite{BozzelliMP15} for the \kltl-fragment of \yltlso. The second-order quantification requires a novel, non-trivial approach that we outline afterward. The translation models different paths of the transition system through distinct atomic propositions. Therefore, the resulting formula ranges over an augmented set $\ap' = \{ \, p_\pi \mid p~\in(\ap\cup~S) \land \pi \in \text{PV}\}$, where $\text{PV}$ denotes a set of \emph{path variables}. For a \yltlso formula $\varphi$, it suffices to introduce one such path variable for every knowledge operator $\K_a$ and causal predicate, two for every second-order quantifier, and one initial variable $\alpha$ that encodes the universal application of the trace semantics. Hence, $\ap'$ is finite. We can enforce that these new propositions in $\ap'$ evolve according to the transitions of $\ekripke = (\kripke,\Omega)$:
\begin{align*}
\theta(\pi,\kripke) = &\LTLpastfinally \Big((\lnot\LTLprevious\top) \land (s_0)_\pi\land \LTLglobally \bigwedge_{s\in S}  \big[ s_\pi \rightarrow  \bigwedge_{t\in S \setminus \{s\}} \!\!\!\!\lnot t_\pi \\ & \land \bigvee_{t\in \Delta(s)} \!\!\big( \LTLnext t_\pi \land  \!\bigwedge_{p\in \Lambda(s,t)} p_\pi \land \!\bigwedge_{p\in \AP \setminus \Lambda(s,t)} \!\!\!\!\lnot p_\pi   \big)\big] \Big) \, .
\end{align*}
In the end, our proof establishes the following equivalence, where $\theta(\alpha,\kripke)$ is used to quantify over all initial traces of $\kripke$: The model $\ekripke$ satisfies the formula $\varphi$ iff
\begin{align}
    \forall \AP_\alpha \cup S_{\alpha}. \, \theta(\alpha,\kripke) \rightarrow \mathit{enc}(\alpha,\varphi,v_\frak{E})\label{eq:univquant}
\end{align}
is satisfiable. The third parameter of $\mathit{enc}$ is a mapping from second-order variables to trace variables, where $v_\frak{E}$ denotes the empty mapping.
The third parameter will play a central role later in the translation of second-order quantifiers, which we discuss after the simple cases.\smallskip

\emph{Temporal Operators:} Since QPTL has the same temporal operators as \yltlso, the translation in these cases is straightforward. We define $\mathit{enc}$ through recursion on $\varphi$:
\begin{alignat*}{2}
	&\mathit{enc}(\pi,p,v) & &= p_\pi\\
	&\mathit{enc}(\pi,\lnot \psi,v) & &= \lnot \mathit{enc}(\pi,\psi,v)\\
	&\mathit{enc}(\pi,\psi_1 \land \psi_2,v) & &= \mathit{enc}(\pi,\psi_1,v) \land \mathit{enc}(\pi,\psi_2,v)\\
	&\mathit{enc}(\pi,\LTLnext \psi,v) & &= \LTLnext \mathit{enc}(\pi,\psi,v) \enspace.
\end{alignat*}
The other operators follow analogously. Intuitively, $\pi$ is the currently scoped trace variable and corresponds to $\pi$ in the semantics of KLTL and \yltlso (cf.\ Sections~\ref{sec:prelims} and~\ref{sec:synsem}).\smallskip

\emph{Epistemic Operator:} For the epistemic operator $\K_a$, the translation exploits that quantification over paths $\pi \in \kripke$ can be encoded through QPTL quantifiers as outlined already with Formula~\ref{eq:univquant}. 
We additionally need to constrain these initial paths to be observation equivalent for agent~$a$ as follows:
\begin{alignat*}{2}
	&\mathit{enc}(\pi,\K_a \, \psi,v) = \, & \forall \AP_\rho \cup S_\rho.&\big(\theta(\rho,\kripke) \land \LTLpastglobally(\pi =_{\Omega(a)} \rho)\big)\\ & & & \rightarrow \mathit{enc}(\rho,\psi,v)\enspace ,
\end{alignat*}
where observation equivalence $=_{\Omega(a)}$ for an agent $a$ at a certain time point is translated as follows:
$$ \pi =_{\Omega(a)} \rho \equiv \bigwedge_{p \in \Omega(a)} p_\pi \leftrightarrow p_\rho \enspace .$$

\emph{Second-Order Quantifiers \& Causal Predicates:} The second-order quantifiers for causes require a more complex encoding than $\K_a$, since they quantify over sets of traces and not single traces. The main idea of our encoding is that for any existentially quantified causal set, all initial system traces have to either satisfy \emph{all} associated causal predicates or \emph{none} of them. In the former case the trace is in the cause at all anchor points, while in the latter case the trace is in the complement at all anchor points. This ensures that there are no traces that are in the cause at some anchor points and not in the cause at others, which would mean these causes are not equal and no single set qualifies at every anchor point. Note that this connection exploits that causes are uniquely determined at every anchor point, if they exist. We can encode these requirements with:
\begin{alignat}{2}
	&\mathit{enc}(\pi,\exists X. \, \psi,v) = \, & &\big(\forall \AP_\rho \cup S_\rho. \mathit{enc}(\pi, \psi,v_1)\label{eq:incause}\\
    & & &  \lor \mathit{enc}(\pi, \psi,v_2)\big)\enspace .\label{eq:outcause}%\\
    %& & & \land \exists \AP_\rho \cup S_\rho. \mathit{enc}(\pi, \psi,v_1)\label{eq:causenempty}\\
    %& & & \land \exists \AP_\rho \cup S_\rho.  \mathit{enc}(\pi, \psi,v_2)\enspace.\label{eq:compnempty}
\end{alignat}
The variable mappings $v_1$ and $v_2$ encode whether the subformula refers to the causal set or its complement, respectively: 
\begin{align*}
	&v_1(Y) = \,\begin{cases}(\rho,\top)~&\text{if Y = X,}\\ v(Y)~&\text{otherwise, and} \end{cases}\\
	&v_2(Y) = \, \begin{cases}(\rho,\bot)~&\text{if Y = X,}\\ v(Y)~&\text{otherwise.} \end{cases}
\end{align*}
Intuitively, Subformula~\ref{eq:incause} encodes that the trace of $\rho$ is in the cause $X$ at all causal predicates \smash{$X$ \raisebox{-1pt}{$\xrsquigarrow{A}$} $\varphi$}, while Subformula~\ref{eq:outcause} encodes that the trace is in no such cause. The mappings $v_1$ and $v_2$ track the location of subformulas, i.e., by mapping second-order variables to $\top$ and $\bot$, respectively. It remains to encode the (non-)membership in $X$ at the location of the causal predicates: %Subformula~\ref{eq:causenempty} encodes that at least one trace is in the cause, while Subformula~\ref{eq:compnempty} encodes that at least one trace is in the complement. 
\begin{alignat*}{2}
	&\mathit{enc}(\pi,X \xrsquigarrow{A} \varphi,v) =  \begin{cases}\psi_\mathit{cause}~&\text{if }v(X).2 = \top,\\ \lnot \psi_\mathit{cause}~&\text{if }v(X).2 = \bot,
    \end{cases}
\end{alignat*}
Where $v(X).2$ stands for the second component of the tuple (and $v(X).1$ for the first), and 
where we encode membership in $\mathit{Cause}(\psi,\pi,i,A)$ via quantifying over individual traces to express that $v(X).1$ is in the downward-closed set as defined in Section~\ref{sec:prelims}:
\begin{alignat*}{2}
    \psi_\mathit{cause} = &\forall \AP_\sigma \cup S_\sigma. \theta(\sigma,\kripke) \rightarrow \\ &\big(\sigma \leq^A_\pi v(X).1 \rightarrow \mathit{enc}(\sigma,\varphi,v)\big) \enspace .
\end{alignat*}
The similarity relation $\leq^A_\pi$ can be translated into temporal logic as follows:
\begin{align*}
    \sigma \leq^A_\pi \rho \equiv \LTLpastfinally \Big(&(\lnot \LTLprevious\top) \land \LTLglobally \big[\bigwedge_{p \in \act \setminus A} p_\sigma \leftrightarrow p_\pi\\
    &\land \bigwedge_{p \in A} (p_\sigma \not\leftrightarrow p_\pi) \rightarrow (p_\rho \not\leftrightarrow p_\pi)\big]\Big)\enspace .
\end{align*}
This concludes the description of $\mathit{enc}$. The equivalence of the model-checking problem to the satisfiability of Formula~\ref{eq:univquant} can be shown through structural induction on $\varphi$.
\end{proof}

\paragraph{Complexity.} It is easy to see that the translation function $\mathit{enc}$ may double the size of the formula when encoding a second-order quantifier (cf. Equations~\ref{eq:incause} and~\ref{eq:outcause}). 
Less obvious is an added propositional quantifier alternation, introduced by translating causal predicates \smash{$X$ \raisebox{-1pt}{$\xrsquigarrow{A}$} $\varphi$} in the second disjunct (cf. Equation~\ref{eq:outcause}); these predicates are translated to $\lnot \psi_\mathit{cause}$ by adding a negated universal quantifier. 
This results in a non-elementary (tower-exponential) space complexity of model-checking \yltlso formulas, where the tower grows with any nesting of second-order quantifiers in the scope of causal predicates and with nesting of knowledge operators and negations. % Both of these result in propositional quantifier alternations in the QPTL encoding.
%i.e., the complexity is given by a tower of exponents that grows the more second-order quantifiers are nested in consequents of causal predicates, and similarly the more epistemic operators are nested with negations. 
%
%We defer a detailed complexity analysis since 
Previous results suggest that this complexity is close to optimal: \yltlso subsumes \kltl, which has a similar non-elementary lower bound scaling with nested epistemic operators~\cite{BozzelliMM24}. 
%Lower bounds for 
Constructing temporal causes as automata has a doubly exponential lower bound, in the size of the consequent~\cite{CarelliFS25}, which matches with nested causal predicates growing the exponents of the model-checking complexity.

In practice, the specifications we considered, such as ICE, ECE and FCE,
%as well as the privacy specification introduced in Sections~\ref{sec:motivation} and~\ref{sec:reqs}
all only contain one propositional quantifier alternation in their encodings, and model-checking them with the algorithm outlined for Theorem~\ref{thm:mc} therefore scales exponentially in the size of the effect $\varphi$ and polynomially in the size of the system. Our experiments confirm that these specifications can be verified for systems of nontrivial size.

%\subsection{The Comparative Expressiveness of YLTL$^2$}
%
%It may be useful to consider the following second-order hyperproperty that is equivalent to counterfactual explainability as formalized in Definition~\ref{}. We start by defining a predicate that indicates that some set of traces $X$ is a cause for an effect $\varphi$ on some trace $\pi$.
%\begin{align*}
%    \mathcal{C}(X,\pi,\varphi) = \forall \pi' \in X. \forall \pi'' \in \mathfrak{S}. \pi' \leq_\pi \pi'' \rightarrow \varphi(\pi').
%\end{align*}

%This encoding uses the characterization of temporal causes as downward closed sets presented by Finkbeiner et al.~\cite{FinkbeinerFMS24}. 

%\begin{align*}
%    \forall \pi . \exists X . \forall \pi'. (\pi \sim_{O} \pi') \rightarrow \mathcal{C}(X,\pi',\varphi)
%\end{align*}

\section{Experiments}\label{sec:experiments}

We report experiments on verifying explainability and privacy requirements specified in \yltlso. Although we describe an encoding into QPTL satisfiability in Theorem~\ref{thm:mc} for brevity, there is no satisfiability checker for QPTL. Hence, our evaluation uses the model-checking tool AutoHyper~\cite{BeutnerF23a} as a backend procedure, because QPTL satisfiability can be encoded in HyperLTL model checking~\cite{FinkbeinerRS15}. The evaluation was conducted on MacOS with an M3 Pro 4.05~GHz processor and 36GB of memory.\footnote{Code: \url{https://doi.org/10.5281/zenodo.16421482}} %\footnote{We will make our prototype implementation and benchmarks publicly available after the double-blind reviewing phase.}

\begin{table}[t]
		\centering
            \def\arraystretch{1.1}
		\setlength\tabcolsep{1.5mm}
		\setlength\dashlinedash{1pt}
		\setlength\dashlinegap{2pt}
		\begin{tabular}{lccccc}
			\toprule
			\textbf{Type} & $|B|$ & ICE & ECE & FCE & Priv.~I \\
			\midrule
			\textsc{Blind}  & 
            \makecell{2 \\  3 \\  4 \\ 5} & 
            \makecell{\xmark /0.85 \\  \xmark /1.57 \\  \xmark /3.42 \\ \xmark /7.85} & 
            \makecell{\xmark /0.88 \\  \xmark /1.77 \\  \xmark /3.68 \\ \xmark /9.45} & 
            \makecell{\xmark /0.96 \\  \xmark /1.73 \\  \xmark /3.61 \\ \xmark /10.3} & 
            \makecell{\cmark /0.25 \\  \cmark /0.27 \\  \cmark /0.27 \\ \cmark /0.22} \\
            \midrule
			\textsc{Public}  & 
            \makecell{2 \\  3 \\  4 \\ 5} & 
            \makecell{\cmark /0.78 \\  \cmark /1.47 \\  \cmark /3.19 \\ \cmark /7.24} & 
            \makecell{\cmark /0.87 \\  \cmark /1.80 \\  \cmark /3.73 \\ \cmark / 9.52} & 
            \makecell{\cmark /0.81 \\  \cmark /1.63 \\  \cmark /3.48 \\ \cmark /10.3} & 
            \makecell{\xmark /0.30 \\  \xmark /0.30 \\  \xmark /0.50 \\ \xmark /1.21} \\
            \midrule
			\textsc{Explain}  & 
            \makecell{2 \\  3 \\  4 \\ 5} & 
            \makecell{\cmark /0.86 \\  \cmark /1.39 \\  \cmark /3.40 \\ \cmark /8.03} & 
            \makecell{\xmark /1.02 \\  \xmark /1.89 \\  \xmark /3.82 \\ \xmark / 9.92} & 
            \makecell{\xmark /1.02 \\  \xmark /1.82 \\  \xmark /4.03 \\ \xmark /10.4} & 
            \makecell{\xmark /0.24 \\  \cmark /0.25 \\  \cmark /0.29 \\ \cmark /0.29} \\
			\bottomrule
		\end{tabular}
\caption{Verification results and runtime in seconds for checking the Dutch auction model introduced in Section~\ref{sec:motivation} with a set of bidders $B$ against the explainability requirements introduced in Sections~\ref{sec:reqs} and $b_2$-privacy (Priv.~I, cf.~Section~\ref{sec:motivation}).}\label{tab:dutch}
\end{table}

\subsection{Dutch Auction}

We conducted experiments with all three versions of the Dutch auction system introduced in Section~\ref{sec:motivation} and the explainability requirements introduced in Secion~\ref{sec:reqs}. We also checked $b_2$-privacy as introduced in Section~\ref{sec:motivation}. In Table~\ref{tab:dutch}, we list whether the respective auction version satisfies a requirement and how much time the model checker required to produce this verdict. We consider a scaling number of bidders $B$, which increases the size of the transition system exponentially, and this scaling is mirrored in the runtimes. %More bidders also increase the encoding size of $\leq^A_\pi$ as described in the proof of Theorem~\ref{thm:mc}. This has a slightly stronger effect on the ECE and FCE requirements, as they place symmetric difference constraints on larger sets $A$, i.e., on $\act \setminus \{b_1\}$ and all of $\act$, respectively. This is reflected in the runtime, which is slightly longer for these two requirements.

The verification verdicts match the intuition described in Section~\ref{sec:motivation}: The blind auction is not explainable but private, while the public auction naturally ensures sufficient flow of information to achieve all three explainability requirements, but also reveals the private bids. The explainable auction transports enough information on the causal dependencies over the actions of the bidders themselves, but does not ensure ECE or FCE. An interesting case is the privacy of the explainable auction in the case of two bidders. In this scenario, revealing the value of the highest bid allows an agent to infer that the other agent placed the bid in case they did not bid themselves, in this way violating the privacy constraint. Hence, explainability and $b_2$-privacy can only be achieved with more than two bidders. %We see a similar phenomenon with the other experiments.

\subsection{Rock Paper Scissors}

To differentiate ICE and ECE in more detail, we model the classic game of Rock Paper Scissors, where every round two players select between the three eponymous objects  ($r$, $p$, and $s$). Matching objects result in a draw, otherwise $p$ beats $r$, $s$ beats $p$, and $r$ beats $s$. An agent $i$ does not see the selection of the other, but only the outcome draw~($d$) or loss~($l_i$). We consider explainability of $l_1$, i.e., the loss of Agent 1. Since an agent knows what action their picked one looses to, and which action would have won instead, even this blind version of the game is fully explainable according to all three requirements. The verification times are as follows.

\begin{table}[!h]
\centering
            \def\arraystretch{1.1}
		\setlength\tabcolsep{2mm}
\begin{tabular}{lccccc}
			\toprule
			\textbf{Type} & ICE & ECE & FCE & Priv.~II \\
			\midrule
			\textsc{Standard} & \cmark /0.49 & \cmark /0.53 & \cmark /0.55  & \xmark /0.24 \\
            \midrule
			\textsc{ + Well}  & \xmark /1.02 & \cmark /0.93 & \xmark /1.08 &  \cmark /0.29  \\
			\bottomrule
\end{tabular}
\end{table}
\noindent We also consider a popular variant with a fourth action, Well ($w$), which beats $s$ and $r$ but looses to $p$. This version does not satisfy ICE because, e.g., when loosing with rock ($r_1$) Agent 1 does not know whether scissors (against $p_2$) or paper (against $w_2$) would have been their winning move. ECE is still satisfied because the agent knows against which moves they would have won, i.e., $p_1$ would have won if the other agent had played $r_2$ or $w_2$ (the cause is then $\lnot r_2 \land \lnot w_2$). We also verified the following conditional privacy specification (results reported under Priv.~II): $ \LTLglobally (\lnot d \rightarrow  \lnot \K_1(p_2))$, i.e., whenever the outcome is not a draw, an agent does not know whether the other agent played $p$. This is only satisfied in the extended version of the game, because an agent cannot discern between $w_2$ and $p_2$ except in case of a draw with their own picked action.

\subsection{Matching Pennies}

To further explore the scalability of our algorithm we consider a blind version of the game Matching Pennies, played collaboratively: Each player chooses heads or tails for their coin and all players win together when their choices match. A player $i$ only sees their coin $c_i$ and the outcome $w$, as well as an explanatory \emph{blaming} output $b_i$, which is enabled if their coin was the only mismatch. Without the blaming output, the setup satisfies all three explainability requirements only in the 2-player case. With the blaming output, it additionally satisfies ICE for any number of players. Runtime results for the latter experiments are shown in Figure~\ref{fig:pennies}. They confirm that model-checking the three explainability requirements scales polynomially  with an increasing system size. The plot also shows that the time needed to verify the conditional privacy requirement $ \LTLglobally (\lnot w  \rightarrow \lnot \K_1(c_2))$ stays practically  constant (cf.~Priv.\ III). This property states that when Player 1 does not win, they do not know the value of Player 2's coin. This is satisfied only by the non-blaming version with more than two players. We suspect this property allows the model checker to fully abstract away from the coins of the other players, while this is not possible for the explainability specifications, which place constraints on these coins in the encoding of $\leq^A_\pi$ (cf.\ proof of Theorem~\ref{thm:mc}).

\begin{figure}[t]
\centering
\includegraphics[width=0.45\textwidth]{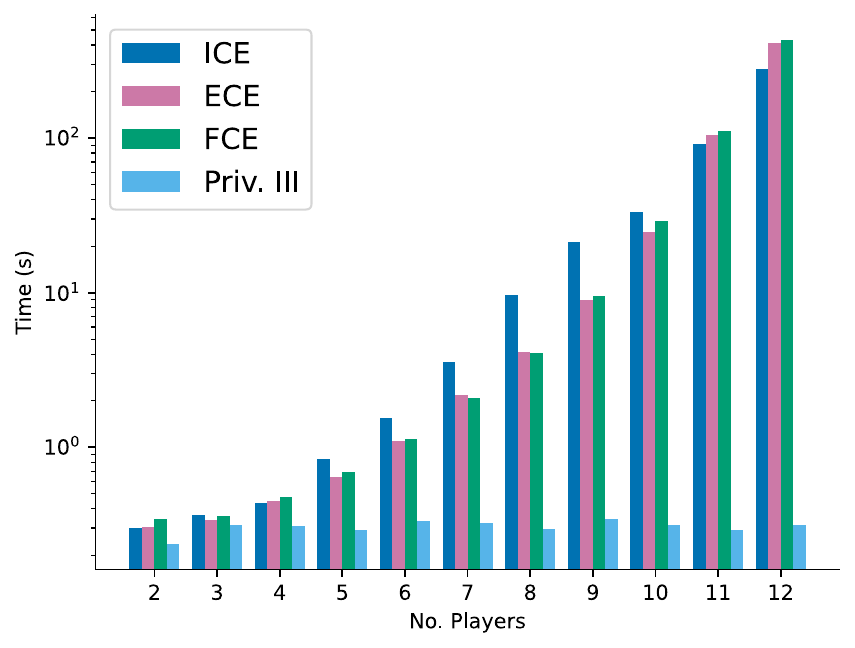}
\caption{Runtime results for verifying explainability and privacy requirements of blind Matching Pennies games with blaming output. Note that system size scales exponentially with the number of players, i.e., the 12-player game has 4096 transitions.}\label{fig:pennies}
\end{figure}

\section{Related Work}\label{sec:related}

\textit{Explanation Generation:} Our work is concerned with formalizing explainability as a system property and not with generating explanations like other literature on explainable AI~\cite{Ribeiro0G16,LundbergL17,ShihCD18}. Nevertheless, these generation methods are a central motivation for us because the systems resulting from their implementation will need to tread the fine line between explainability and privacy that we study here. Applying our insights may be easier for model-based approaches that deal with descriptive systems such as decision trees~\cite{DarwicheJ22,DBLP:conf/nips/ArenasBBPS21,DBLP:conf/kr/ArenasBBCS24,CarbonnelC023} than for black-box methods based on, e.g., abduction~\cite{IgnatievNM19} or heuristics~\cite{AngelinoLASR17,LiLCR18}.

\textit{Formal Explanations:} Logical reasoning techniques to compute explanations are widely studied~\cite{Darwiche23,Wu0B23,LeofanteBR23}. Khan and Lespérance~\shortcite{KhanL21} combine epistemic and causal reasoning in the situation calculus, which has been extended to explain agent behavior~\cite{KhanR23}. Temporal properties are used as explanations in AI planning~\cite{KimMSAS19,EiflerST020}; %In the same domain, 
another popular framework in planning is model reconciliation~\cite{ChakrabortiSZK17}. Planning with explanatory actions~\cite{ChakrabortiKSSK19,Sreedharan19} follows a similar goal as us in effecting the epistemic state through observations. Several works on axiomatizing explainable classifiers~\cite{AmgoudB22,LiuL23} employ counterfactual reasoning and consider partial knowledge. There are works on actual causality~\cite{Halpern16} for explanations~\cite{ChocklerH24}, as well as counterfactual modal logic~\cite{Aguilera-Ventura23}, but we use temporal causes~\cite{CoenenFFHMS22,FinkbeinerFMS24} to deal with the sequential and possibly non-terminating nature of multi-agent systems.\smallskip
%There is also a large body of work that uses logical reasoning techniques to compute explanations, often with formal guarantess~\cite{IgnatievNM19,Darwiche23,Wu0B23}.

%\textcolor{red}{TODO}: KR cites
%\cite{DBLP:conf/nips/ArenasBBPS21,DBLP:conf/kr/ArenasBBCS24} - a first order logic for the explanation of decision trees. 

\textit{Logics for Information Flow:} Logics for hyperproperties~\cite{ClarksonS10} have been a popular framework to study information flow and partly subsume epistemic temporal logic~\cite{BozzelliMP15,Rabe16}, which has similarly been used for information-flow control~\cite{BalliuDG11,HalpernO08}. Counterfactual reasoning has been encoded in such \emph{hyperlogics}~\cite{CoenenFFHMS22,FinkbeinerS23,BeutnerFFS23} for checking hypotheses. Several works study hyperlogics in multi-agent systems without second-order quantifiers~\cite{BeutnerF23,BeutnerF24,BeutnerF24b}. The closest work to ours considers a second-order hyperlogic without decidable model checking~\cite{BeutnerFFM23,BeutnerFF024}.\smallskip

\textit{Interdisciplinary Aspects:} Our perspective on explainability is 
%This paper takes a perspective on explainability in system models that is 
rooted in logic and information-flow theory. Building self-explanatory systems from explainable system models is an intriguing interdisciplinary problem that lies outside the scope of this paper. 
%an intriguing interdisciplinary problem that lies outside the scope of this paper. 
There are user studies regarding which explanations are preferred by users and how to visualize them~\cite{SeimetzE021,SchlickerLOBKW21,HorakCMHFMDFD22,EiflerBCF022,BrandaoMMLC22,DelaneyPGK23}. We use temporal causes, which capture the idea of minimally editing previous actions of the agents, but our information-flow perspective may also be applied to other explanantia such as presenting counterfactual  executions. This requires modifying the encoding in the proof of Theorem~\ref{thm:mc}. High-level taxonomies for explainability requirements~\cite{KohlBLOSB19,LangerOSHKSSB21} have inspired us to concretize them with formal logic and multi-agent systems. 

\section{Summary \& Conclusion}\label{sec:summary}

This paper presents a logic for multi-agent systems to formally specify explainability requirements of the form: when a certain outcome happens, an agent knows \emph{why}, i.e., what actions caused the outcome. This is expressed with a combination of counterfactual, epistemic, and temporal operators, as well as second-order quantification over sets of  executions. Privacy requirements can be encoded in the same logic and we have described an algorithm to automatically verify whether a system model satisfies such specifications.

Our theoretical and experimental results suggest that our formal explainability specifications capture what it means for a system to be explainable in a qualitative information-theoretic sense: An explainable system needs to ensure sufficient flow of information about causes via a number of observations that serve as explanations. On an abstract level, this means that in explainable systems observation-equivalence needs to be finer than causality.

We discovered an inherent tradeoff between explainability and privacy, which is an intriguing avenue for future work. Automatic satisfiability checking of our logic may allow to identify general rules for this tradeoff beyond analyzing a given system, while repair algorithms may add a minimal set of explanatory observations to a system model to achieve explainability without sacrificing privacy.

\section*{Acknowledgements}
This work was partially supported by the DFG in project 389792660 (TRR 248 -- CPEC) and funded by the European Union through ERC Grant HYPER (No.\ 101055412). Views and opinions expressed are however those of the authors only and do not necessarily reflect those of the European Union or the European Research Council Executive Agency. Neither the European Union nor the granting authority can be held responsible for them.
This work was supported in part by The Israel Science Foundation (grant No.\ 655/25).

%% The file kr.bst is a bibliography style file for BibTeX 0.99c
\bibliographystyle{kr}
\bibliography{references}

\end{document}